\documentclass[sigconf,nonacm]{acmart}
\usepackage{soul,balance}
\usepackage{booktabs} 
\usepackage{comment}
\usepackage{hyperref} 
\usepackage{graphicx} 	
\usepackage{float,wrapfig} 		    
\usepackage{mathtools,verbatim}      
\usepackage{caption}
\usepackage{subcaption}
\usepackage{multicol}
\usepackage{enumerate}
\usepackage{enumitem}
\newtheorem{observation}{Observation}

\usepackage[ruled, vlined,linesnumbered]{algorithm2e}

\begin{document}

\title{
Byzantine Causal Reliable Broadcast with Constant Metadata Overhead 
}

\author{Purv Patel}
\affiliation{%
  \institution{University of Illinois Chicago}
  \city{Chicago}
  \country{USA}
  }
\email{purvp2@uic.edu}

\author{Ajay D. Kshemkalyani}
\affiliation{%
  \institution{University of Illinois Chicago}
  \city{Chicago}
  \country{USA}
  }
\email{ajay@uic.edu}



\begin{abstract}
Asynchronous Byzantine Reliable Broadcast (BRB) is a fundamental primitive that guarantees agreement and validity in distributed systems subject to Byzantine faults, but it lacks ordering guarantees. Causal message ordering is important for many applications such as blockchain and social networking. Existing solutions for Byzantine Causal Reliable Broadcast (BCRB) have several drawbacks. Such protocols typically append vector clocks or dependency barriers to application messages, resulting in a metadata overhead that scales linearly with $n$, the number of processes in the system. 
In this paper, we propose the first optimal message overhead BCRB algorithm that guarantees safety. 
We do this by re-engineering Bracha's BRB algorithm with relatively small but critical modifications, and  prove that our algorithm solves BCRB with optimal message metadata.
The algorithm achieves constant-size $\mathcal{O}(1)$ message metadata overhead and $\mathcal{O}(n^2)$ messages, resulting in $\mathcal{O}(n^2)$ communication word complexity. This is as against $\mathcal{O}(n^3)$ communication word complexity of existing protocols. The algorithm tolerates $f < n/3$ Byzantine processes, which is the well-known optimal resilience bound, and uses four phases.
\end{abstract}

\ccsdesc[500]{Theory of computation~Distributed algorithms}
\ccsdesc[500]{Theory of computation~Concurrent algorithms}
\ccsdesc[300]{Computer systems organization~Dependable and fault-tolerant systems and networks}
\keywords{Causal Broadcast, Byzantine Fault-Tolerance, Threshold Cryptography, Constant Message Overhead}

\maketitle

\section{Introduction}
Ensuring consistent transaction order in distributed systems subject to Byzantine failures is essential for applications ranging from decentralized state-machine replication (SMR) to financial ledgers and social networks. In these environments, Byzantine processes can observe pending messages, manipulate their delivery order, or inject front-running transactions to extract economic value or compromise consistency. 
While total-order primitives (atomic broadcast) are commonly used to resolve ordering conflicts, they require heavy consensus overhead and use randomization \cite{Cachin2001}. 
Causal ordering offers an alternative ordering guarantee by delivering messages according to Lamport's \textit{happens-before} relationship ($\to$) \cite{LLclock}. Causal ordering (CO) of broadcasts under crash failures has been studied extensively since Birman and Joseph's seminal paper about 40 years ago \cite{birman1987reliable}. On the other hand, Bracha defined Byzantine Reliable Broadcast (BRB) and proposed an elegant algorithm to solve it, also about 40 years ago \cite{DBLP:journals/iandc/Bracha87}. BRB has also been extensively used and studied since then due to its importance. However, combining the two (BRB + CO) {\em efficiently} to solve Byzantine Causal Reliable Broadcast (BCRB) has remained an open problem since then. 
%
Scaling BCRB to large networks is severely limited by previous algorithms which append $\mathcal{O}(n)$ metadata to messages and incur $\mathcal{O}(n^3)$ communication word complexity, where $n$ is the number of processes 
\cite{DBLP:journals/tcs/AuvolatFRT21,Cachin2001}. The only known BCRB algorithm with $\mathcal{O}(1)$ metadata overhead, $\mathcal{O}(n^2)$ messages, and $\mathcal{O}(n^2)$ communication word complexity has provable bounds on the probability of violation of safety \cite{PKbcrb}. Our result fills in the gap by also guaranteeing safety.

Decoupling causal delivery from the network layer while maintaining a constant $\mathcal{O}(1)$ message metadata overhead presents fundamental challenges in asynchronous systems. Bracha proposed the first BRB protocol \cite{DBLP:journals/iandc/Bracha87}. It has been formally shown by Misra et al. \cite{misra2022causal} that Bracha's protocol does not satisfy even a weakened form of safety, called weak safety, in the presence of even one Byzantine process. It has also been proved that (deterministically) achieving both strong causal safety and liveness without using cryptography is impossible in asynchronous Byzantine systems \cite{misra2022detecting, DBLP:journals/tpds/MisraK24,DBLP:journals/pc/MisraK25,DBLP:conf/netys/MisraK22,DBLP:conf/icdcn/MisraK23}. Existing  protocols for BCRB weaken safety to weak safety \cite{DBLP:journals/tcs/AuvolatFRT21,DBLP:journals/tpds/MisraK24}, or accept probabilistic safety guarantees as in \cite{PKbcrb,Cachin2001}. Such solutions either couple the application data directly with a costly consensus plane to prevent front-running \cite{Cachin2001}, accept front-running \cite{DBLP:journals/tcs/AuvolatFRT21,PKbcrb,minicast}, or incur linear metadata overheads that degrade throughput \cite{minicast,DBLP:journals/tcs/AuvolatFRT21, Cachin2001}.

\medskip
\textbf{Contributions:}
To resolve these bottlenecks, we present a novel BCRB protocol that achieves constant-size $\mathcal{O}(1)$ message metadata and $\mathcal{O}(n^2)$ messages.
We do this by re-engineering Bracha's BRB algorithm \cite{DBLP:journals/iandc/Bracha87} with relatively small but critical modifications that integrate the causal ordering logic with the BRB logic, and  prove that our algorithm solves BCRB, including strong safety and liveness, with optimal message metadata.
The algorithm also achieves $\mathcal{O}(n^2)$ communication word complexity as against $\mathcal{O}(n^3)$ communication word complexity of previous protocols \cite{Cachin2001,DBLP:journals/tcs/AuvolatFRT21}, which had at least $\mathcal{O}(n)$ message metadata. 
Our algorithm can tolerate $f < n/3$ Byzantine processes which is the well-known optimal fault-tolerance bound and uses four phases. 
Additionally, before proposing our algorithm, we also give a new definition of Byzantine happens-before and BCRB that attempts to characterize the problem precisely. 

Our algorithm is {\em throughput-scalable}, i.e., throughput and rate of sending are not limited by the latency of messages. 
A process does not wait for the previous BCRB instance to be locally delivered before sending in the next BCRB instance.

{\bf Outline:} Section~\ref{sec:background} gives the system model and background. Section~\ref{sec:algo} gives our algorithm (Algorithm~\ref{alg:brachacb}). Section~\ref{sec:proof} gives the correctness proof. Section~\ref{sec:concl} concludes. 

\section{System Model and Background}
\label{sec:background}

\subsection{System Model}
We assume an asynchronous distributed system of $n$ processes, denoted by $p_1, p_2, \dots, p_n$. The system is subject to Byzantine failures. Any $f$ processes can behave arbitrarily by dropping messages, forging causal histories, or colluding. The number of such Byzantine processes is bounded by $f < n/3$ (the BRB resilience bound which is optimal) \cite{DBLP:journals/iandc/Bracha87,DBLP:journals/jacm/BrachaT85}.  Processes communicate via reliable and authenticated point-to-point FIFO channels. Traditionally, the Byzantine Reliable Broadcast (BRB) primitive (e.g., Bracha's BRB \cite{DBLP:journals/iandc/Bracha87}) has been defined to satisfy Validity, Agreement, and Integrity.
\begin{definition}[Traditional Validity, Agreement, Integrity (VAI)]
The traditional definitions are as follows \cite{DBLP:journals/iandc/Bracha87}.
\begin{itemize}
     \item \textbf{Validity}: If a correct process broadcasts $m$, all correct processes eventually deliver $m$.
     \item \textbf{Agreement}: If a correct process delivers $m$, all correct processes eventually deliver $m$.
     \item \textbf{Integrity}: A message $m$ is delivered at most once by each correct process, and if the sender is correct, then only if it was broadcasted by the sender.
 \end{itemize}
\label{def:tradvai}
\end{definition}

Lamport's happens-before relation $\to$ \cite{LLclock} over the set of events in the system is as follows. For any two events $e_1, e_2$, we have $e_1 \to e_2$ if: (1) they occur at the same process and $e_1$ precedes $e_2$ in local execution; (2) $e_1$ is the broadcast event of a message and $e_2$ is the delivery event of that message; or (3) there exists an event $e_3$ such that $e_1 \to e_3$ and $e_3 \to e_2$.

For our protocol, we assume an $(n-f, n)$ threshold decryption scheme (e.g., Shoup's threshold cryptosystem \cite{shoup2000practical}), where a ciphertext can only be decrypted once at least $n-f$ valid decryption shares from distinct processes are collected.


\subsection{Broadcast Primitives}
\label{sec:bcprimitives}

We define two primitives of our architecture: 
    \begin{itemize}
        \item $\text{\texttt{bcrb\_broadcast}}(payload)$: Invoked by the application layer at process $p_i$ to causally broadcast a plaintext $payload$.
        \item $\text{\texttt{bcrb\_deliver}}(sender, sn, plaintext)$: Upcall from the BCRB layer to the application layer to deliver the ordered $plaintext$ payload originally sent by $p_{sender}$ with sequence number $sn$.
    \end{itemize}

We also identify three special events for a message received.
\begin{definition}[Special event definitions]
Three special events at a process per application message received are defined as follows.
\begin{enumerate}
\item $plaintext\_ready$: The plaintext of the payload of a received message is readable by the application. 
\item $bcrb\_ready$: The received message has cleared the Byzantine hurdles 
of ordering with respect to its causal predecessors; hence it is ready to be given to the application next, (pending any revelation required, if any).
\item $bcrb\_deliver$: The plaintext of a received message is delivered to the application, satisfying VAI requirements and ordering with respect to causal predecessors.
\end{enumerate}
\end{definition}
For any message and depending on the algorithm, at a process, $plaintext\_ready$ and $bcrb\_ready$ may occur in any order; $bcrb\_deliver$ occurs last. For the same message at a different process, the order of occurrence may be different.

\begin{definition}[Front-running attack]
A {\em front-running attack} on message $m_1$ is mountable by message $m_2$ sent by $p_i$ when $m_2$ is $\texttt{bcrb\_broadcast}$ between $plaintext\_ready(m_1)$ and $bcrb\_deliver(m_1)$ at $p_i$. 
The attack is successful if $\texttt{bcrb\_deliver}(m_2)$ occurs before $\texttt{bcrb\_\-de\-l\-iv\-er}(m_1)$ at a correct $p_j$.
\end{definition}

\subsection{Safety and Liveness Definitions}

To define safety, we first need a definition of the ``happens before''/causality relation $\to$ in Byzantine systems. 
Existing definitions have some drawbacks.  That in \cite{DBLP:journals/tcs/AuvolatFRT21} does not consider the  impact of Byzantine processes. That in \cite{DBLP:journals/tpds/MisraK24}, though correct for unicast mode, seems underspecified for broadcast because messages may not be delivered in the default sequential order of sending by a Byzantine process in an implementation due to lack of a FIFO Byzantine reliable broadcast channel. 
We propose a new definition that attempts to characterize the relation $\to$ on the set $\mathcal{M}$ of all messages delivered by correct processes, using only local events.

The first rule is based on that of Auvolat et al. \cite{DBLP:journals/tcs/AuvolatFRT21}. But we add that the total order on the set of messages sent by $p_i$ that are bcrb-delivered by correct processes is decided by the order of delivery, which is decided by the order on the sequence numbers of those messages. 
This is to simplify the implementation. This rule orders all the messages in $\mathcal{M}$ sent by a process on a per sender (process) basis.
The second rule captures the causality relation between a message $m$ received and a message $m'$ subsequently sent at any (correct or Byzantine) process. A causal dependency of $m'$ on $m$ may arise {\em if and only if} plaintext of $m$ is readable (at a $plaintext\_ready(m)$ event) before sending $m'$ (a \texttt{bcrb\_broadcast}($m'$) event). 
The third rule captures transitivity of $\to$ to account for the induced ordering introduced by message chains.  
With this definition of $\to$, specifically due to the second rule, front-running attacks by Byzantine processes are ruled out in any system that implements causal safety based on this new definition.

\begin{definition}[Happens before relation: New definition for Byzantine systems]
\label{def:hbbyz}
Let $\mathcal{M}$ be the set of (application) messages bcrb-delivered by correct processes in an execution. A Byzantine causal order on $\mathcal{M}$ is a partial order $\to$  on $\mathcal{M}$, i.e., for $m, m' \in \mathcal{M}$, $m \to m'$ if any of these rules is satisfied:
\begin{enumerate}
\item For any (correct or faulty) process $p_i$, the set of messages from $p_i$ bcrb-delivered by any correct process $p_j$ is totally ordered by $\to$, defined as the order of delivery of those messages. This order of delivery can be chosen to be the order on the sequence numbers of those messages.
\item If $p_i$ \texttt{plaintext\_ready}($m$) before it \texttt{bcrb\_broadcast}($m'$), then $m \to m''$. 
\item If $m \to m'$ and $m' \to m''$, then $m \to m''$.
\end{enumerate}
\end{definition}

This new definition captures the true intent of ``happens before'' or  causality, viz. (potential) influence between messages while accounting for such influences introduced by Byzantine processes also. 
This ``happens before''/causal order relation can be seen to be a partial order.

In contrast, an alternate definition for Byzantine systems using the $bcrb\_ready$ event instead of the $plaintext\_ready$ event in the second rule is not adequate because the plaintext of a message may not be available at this event (the definition would be stronger) or may be available earlier (the definition would be weaker). An example of the latter is a BRB algorithm (such as Bracha's BRB) that uses no encryption.

To make Definition~\ref{def:hbbyz} applicable to any general communication (not just reliable broadcast) in a Byzantine system, replace ``bcrb\_broadcast'' and ``bcrb-delivered'' by ``sent'' and ``delivered'', respectively. The resulting definition also addresses the drawback of the definition in \cite{DBLP:journals/tpds/MisraK24}.

\begin{definition}[Causal Safety and Liveness]
\label{def:sl}
Safety and Liveness are defined as follows, where the $\to$  relation is defined as per Definition~\ref{def:hbbyz}:
\begin{itemize}
    \item \textbf{(Strong) Causal Safety}: If $m_1 \to m_2$ then no correct process triggers \texttt{bcrb\_deliver}($m_2$) before \texttt{bcrb\_deliver}($m_1$). 
    \item \textbf{Weak Causal Safety}: If $m_1$ $\to$ $m_2$ and the causal chain from $m_1$ to $m_2$ passes exclusively through correct processes, then no correct process triggers \texttt{bcrb\_deliver}($m_2$) before \texttt{bcrb\_deliver}($m_1$) \cite{DBLP:journals/tpds/MisraK24,DBLP:conf/netys/MisraK22,DBLP:conf/icdcn/MisraK23}.
    \item \textbf{Liveness}: Every message broadcasted via \texttt{bcrb\_broadcast}\-(payload) by a correct process is eventually delivered via \texttt{bcrb\_\-deliver}(payload) at all correct processes. 
    
    Liveness is same as Validity.
\end{itemize}
\end{definition}

\begin{table*}[htbp]
\centering
\caption{Comparative Analysis of Byzantine Causal Broadcast Protocols}
\label{tab:comparison}
{\footnotesize
\begin{tabular}{lccccccc}
\hline
\textbf{Property} & \textbf{Auvolat et al. \cite{DBLP:journals/tcs/AuvolatFRT21}} & \textbf{Cachin et al. \cite{Cachin2001}} & 
\textbf{\cite{PKbcrb}, Alg. 1} & 
\textbf{\cite{PKbcrb}, Alg. 2} & 
\textbf{\cite{PKbcrb}, Alg. 3} & 
\textbf{\cite{PKbcrb}, Alg. 4} & 
Proposed \\
\hline
Deterministic & Yes & No (randomized) & Yes & Yes & Yes & Yes & Yes
\\
Crypto & No & Yes & Yes & No & Yes & No & Yes
\\
Message Overhead Size & $\mathcal{O}(n)$ 
 & $\mathcal{O}(n)$ (Encrypt. Hdrs) 
 & $\mathcal{O}(1)$  & $\mathcal{O}(1)$  & $\mathcal{O}(1)$ & $\mathcal{O}(1)$ & $\mathcal{O}(1)$
 \\
Communic. Complexity & $\mathcal{O}(n^3)$ & $\mathcal{O}(n^3)$ & $\mathcal{O}(n^2)$ & $\mathcal{O}(n^2)$ & $\mathcal{O}(n^3)$ & $\mathcal{O}(n^3)$ & $\mathcal{O}(n^2)$ 
\\
Front-Running Protection & No & Yes & Yes & No & Yes & No & Yes
\\
Throughput-scalable & Yes & No & Yes & Yes & Yes & Yes & Yes
\\
$P$(weak safety violation) & 0 & $\epsilon > 0$ & $\mathcal{O}(f^{-3} \cdot \ln^3 f)$ & $\mathcal{O}(f^{-3} \cdot \ln^3 f)$ & 0 & 0 & 0
\\
$P$(strong safety violation) & high; front-running $+$  & $\epsilon > 0$ & $\mathcal{O}(f^{-1} \cdot \ln^2 f)$ & front-running  & $\mathcal{O}(f^{-1} \cdot \ln^2 f)$ & front-running & 0
\\
 & fake causal barriers & & & & \\
\hline
\end{tabular}
}
\end{table*}

\textbf{Problem Definition.} BCRB must satisfy Validity, Agreement, Integrity, (Strong) Safety, and Liveness. 
Validity, Agreement, and Integrity is the traditional definition with the broadcast and deliver events replaced by $\texttt{bcrb\_broadcast}$ and $\texttt{bcrb\_deliver}$, as follows.
\begin{definition}[Validity, Agreement, Integrity (VAI)] 
The definitions for Byzantine systems are as follows.
\begin{itemize}
     \item \textbf{Validity}: If a correct process $\texttt{bcrb\_broadcast}$ $m$, all correct processes eventually $\texttt{bcrb\_deliver}$ $m$.
     \item \textbf{Agreement}: If a correct process $\texttt{bcrb\_deliver}$s $m$, all correct processes eventually $\texttt{bcrb\_deliver}$ $m$.
     \item \textbf{Integrity}: A message $m$ is $\texttt{bcrb\_deliver}$ed at most once by each correct process, and if the sender is correct, then only if it was $\texttt{bcrb\_broadcast}$ed by the sender.
 \end{itemize}
\label{def:tradvaibc}
\end{definition}

\subsection{Background: Byzantine Causal Broadcast}
Misra et al. \cite{DBLP:journals/tpds/MisraK24,DBLP:journals/pc/MisraK25,DBLP:conf/netys/MisraK22} showed that it is impossible to provide both strong safety and liveness for causal ordering without using cryptography, but weak safety and liveness can be provided. It was formally proved in \cite{misra2022causal} that Bracha's BRB \cite{DBLP:journals/iandc/Bracha87} does not satisfy even the weak safety property in the presence of even one Byzantine process.

Byzantine causal broadcast algorithm by Auvolat et al. \cite{DBLP:journals/tcs/AuvolatFRT21} enforces causal order by attaching a ``causal barrier" (predecessor message IDs) to every message. This incurs an $\mathcal{O}(n)$ space overhead in application messages and runs directly over an $\mathcal{O}(n^2)$ messages BRB primitive, generating $\mathcal{O}(n^3)$ message communication complexity. The algorithm provides liveness and weak safety.
Cachin et al. \cite{Cachin2001} proposed a secure causal atomic broadcast protocol using threshold encryption. A sender broadcasts the encrypted payload via an atomic broadcast channel. The use of atomic broadcast leads to a probabilistic, randomized (non-deterministic) solution. Once the ciphertext is totally ordered, processes exchange decryption shares to reveal the plaintext. Although this prevents front-running, it tightly couples the data and control planes, forcing large application payloads to be processed by the expensive total-order consensus layer ($\mathcal{O}(n^3)$ communication word complexity and $\mathcal{O}(n)$ message metadata overhead). This protocol provides liveness, and being randomized, provides weak safety and strong safety with high probability. 
Patel et al. \cite{PKbcrb} proposed an $\mathcal{O}(1)$ metadata overhead algorithm with three variants, all having $\mathcal{O}(1)$ message metadata. By modeling link propagation delays as independent exponential variables, they derived bounds on the probability of weak safety violations and of strong safety violations. 
%
Our work fills in the gap by guaranteeing (strong) safety.
A comparison of these protocols with our protocol is given in Table~\ref{tab:comparison}. 

\subsection{Background: Bracha's Reliable Broadcast}
To ground our algorithm and its correctness proof, we first  review Bracha's Byzantine Reliable Broadcast (BRB) algorithm \cite{DBLP:journals/iandc/Bracha87}, which our algorithm reengineers. Bracha's protocol uses three types of messages: \texttt{init} (for \texttt{initial}), \texttt{echo}, and \texttt{ready}. A broadcast is initiated by a sender broadcasting \texttt{init}. The correct processes transition through three steps based on threshold quorums:
\begin{itemize}
    \item \textbf{Step 1}: A process waits to receive one \texttt{init} message from the sender. 
    It then broadcasts \texttt{echo} to all.
    \item \textbf{Step 2}: A process waits to receive $\frac{n+f}{2}$ \texttt{echo} messages or $f+1$ \texttt{ready} messages. It then broadcasts \texttt{ready} to all.
    \item \textbf{Step 3}: A process waits to receive $2f+1$ \texttt{ready} messages, after which it delivers (accepts) the payload.
\end{itemize}
These step thresholds ensure that even if a Byzantine sender attempts to equivocate, no two correct processes can deliver different payloads, and if any correct process delivers a payload, all correct processes eventually deliver it. That is, Validity, Agreement, and Integrity are satisfied.

\section{Proposed Algorithm}
\label{sec:algo}

\begin{algorithm*}[th!]
\SetKwComment{Comment}{$\triangleright$ }{}

\begin{multicols}{2}
\SetInd{0.2em}{0.5em}
\textbf{type and state variables:} \\
  $seq_i \gets 0$ \Comment*{{\footnotesize Sequence number for process $i$'s broadcasts}}
  $\mathbf{LD}_i \gets \text{Array}[n] \text{ of } 0$ \Comment*{{\footnotesize Last msg IDs queued for BCRB delivery}}
  $\mathbf{D}_i \gets \text{Array}[n][n] \text{ of } 0$ \Comment*{{\footnotesize $D[x,y]$ is $LD_y[x]$ as reported by $y$}}   
  $echo\_set \gets \emptyset$ \Comment*{{\footnotesize received ECHO messages}}
  $ready\_set \gets \emptyset$ \Comment*{{\footnotesize received READY messages}}
  $shares[M] \gets \emptyset$ \Comment*{{\footnotesize shares received for $M ( \gets (p_j, seq_j, m))$}}
  $bcrb\_Q \gets [\,]$ \Comment*{{\footnotesize Buffered messages for BCRB delivery}}
  Msg: record \{ \\
    $from: \mathbb{N}$ \Comment*{{\footnotesize sender of $M$}}
    $blocked: \mathbb{B}$ \Comment*{{\footnotesize Flag: $\neg m.CB \le LD$?}}
    $CB: \text{Array}[n] \text{ of } \mathbb{N}$ \Comment*{{\footnotesize Causal barrier vector}}    
 \} \\

\BlankLine
\textbf{upon} the application is ready to BCRB broadcast message $payload$: \\
$seq_i \gets seq_i + 1$\\
$ciphertext \gets encrypt(payload, PK)$\\
broadcast INIT($p_i, seq_i, ciphertext$) \\

\BlankLine
\textbf{upon} INIT($p_j, seq_j, m$) is received: \\ 
broadcast ECHO($p_j, seq_j, m$)
\\

\BlankLine
\textbf{upon} ECHO($M \gets (p_j, seq_j, m)$) is received from $s$: \\
$\forall a, M.CB[a] \gets D[a,s]$\\ 
$M.from \gets s$\\  
$M.blocked \gets \neg (\forall a, D[a,s] \le LD[a])$\\ 
$echo\_set \gets echo\_set \cup \{M\}$\\

\BlankLine
\textbf{upon} READY($M \gets (p_j , seq_j ,m)$) is received from $s$:\\ 
$\forall a, M.CB[a] \gets D[a,s]$\\ 
$M.from \gets s$\\  
$M.blocked \gets \neg (\forall a, D[a,s] \le LD[a])$\\ 
$ready\_set \gets ready\_set \cup \{M\}$\\

\columnbreak
\BlankLine
\textbf{upon} $(M \gets (j, seq_j, m)) \in echo\_set$ received from more than $(n+f)/2$ different $M.from$ such that their $M.blocked=0$ $\land$ READY($p_j, seq_j, m$) not yet broadcast:\\
        broadcast READY($p_j, seq_j, m$)\\

\BlankLine
\textbf{upon} $(M \gets (p_j, seq_j, m)) \in ready\_set$ is received from $(f+1)$ different $M.from$ such that their $M.blocked=0$ $\land$ READY($p_j, seq_j, m$) not yet broadcast:\\
        broadcast READY($p_j, seq_j, m$)

\BlankLine
\textbf{upon} $(M \gets READY(p_j, seq_j, m)) \in ready\_set$ is received from $(2f+1)$ different $M.from$ $\land$ $p_i$ has broadcast READY($p_j, seq_j, m$) $\land$ $LD[j]=seq_j-1$: 
\label{condcb}\\
    $enqueue(bcrb\_Q,(j, seq_j, m).(flag\gets 0))$\\ $LD[j]\leftarrow seq_j$ \label{startingcb} \\ 
    $share \gets dec\_share(m,SK_i)$\\
    broadcast ACK($p_j, seq_j, m, share, i$)\\
    \For{$\text{ each } M' \in (echo\_set \cup ready\_set)$ such that $M'.blocked=1$}{
    $M'.blocked \gets is\_blocked(M')$
    }

\BlankLine
\textbf{upon} ACK($(M \gets (p_j, seq_j, m)), share, s$) is received $\land$ $share$ is valid: \\
$D[j,s]\leftarrow \max(D[j,s], seq_j)$\\
$shares[M] \gets shares[M] \cup \{share\}$\\

\BlankLine
\textbf{upon} ACK($(M \gets (p_j, seq_j, m))$), $share$, $s$) with valid $share$ is received from $(n-f)$ different $M.from$ $\land$ $|shares[M]| \ge n-f$ $\land$ $M.flag=0$ is in $bcrb\_Q$:\\
        $plaintext\gets decrypt(m, shares[M], VK)$ \label{line:ptr} \\
        replace $((p_j, seq_j, m).flag=0)$ in $bcrb\_Q$ by $((p_j, seq_j, plaintext).flag=1)$\\
        \While{$bcrb\_Q.head.flag=1$}{
            $\texttt{bcrb\_deliver}(dequeue(bcrb\_Q).head)$\\
        }        

\BlankLine
\textbf{function} $is\_blocked(M(p_j, seq_j, *))$:\\
return($\neg (\forall a \in [1,n], M.CB[a] \le LD[a])$)\\
\end{multicols}
\caption{Byzantine Reliable Causal Broadcast Protocol, $f \leq \lfloor (n-1)/3 \rfloor$. Code for $p_i$.}
\label{alg:brachacb}
\end{algorithm*}

\subsection{Basic Idea}
\label{sec:basicidea}
Rather than run the causal ordering logic in a layer above the BRB layer that provides the VAI properties, we integrate the two functionalities. Our solution routes payloads via $\mathcal{O}(n^2)$ BRB and then exchanges decryption shares. Specifically, we reengineer Bracha's algorithm with small but critical changes that provide (strong) safety with constant message overhead. 

In the Auvolat et al. solution \cite{DBLP:journals/tcs/AuvolatFRT21}, a $\mathcal{O}(n)$ sized {\em causal barrier}, capturing a BCRB-broadcast message's causal dependencies, is piggybacked on the message passed to the BRB protocol. On BRB-delivery, the causal ordering layer atop buffers the message until the causal predecessor messages that were identified by the causal barrier are delivered. Then the message is cleared for BCRB-delivery. 

In our algorithm (Algorithm~\ref{alg:brachacb}), we identify as {\em BCRB-ready event} for a message $m$, that event at which a correct process has successfully processed $2f+1$ READY($m$) messages after the third phase of Bracha's BRB. (This event will be defined formally in the Correctness proof section. From the proof of Theorem~\ref{th:safety} (Safety), it will become clear that this event satisfies the description given in Section~\ref{sec:bcprimitives}.)  On a BCRB-ready event, $m$ is enqueued for BCRB-delivery (pending possible decryption).  We also add a fourth {\em ACK} phase to Bracha's BRB, in which a process broadcasts the decryption share on an ACK message after the BCRB-ready event. On receiving $n-f$ valid ACKs, the BCRB-readyed message is decrypted. 
As safety is satisfied, causal predecessors would have been already enqueued ahead of $m$.

As for enforcing causal ordering, we push the causal barrier checks inside Bracha's BRB. Specifically, we exploit the following well-known property of Bracha's BRB, here modified by adding causal barrier checks:
\begin{itemize}
\item A READY is not broadcast by correct $p_i$ until the causal barriers associated with the broadcasts of the $\frac{n+f}{2}$ $p_k$ processes' ECHOs or associated with the broadcasts of the $f+1$ $p_k$  processes' READY that can trigger the broadcast of the READY at $p_i$ are satisfied locally, i.e., until those messages in the causal barriers are BCRB-ready. 
\end{itemize}
We show as part of the proof of Theorem~\ref{th:safety} that these causal barriers satisfy another essential property in our algorithm: 
\begin{itemize}
    \item If $\texttt{plaintext\_ready}(m_1) \rightarrow \texttt{bcrb\_broadcast}(m_2)$ then there must exist at least one correct process $p_k$ (as identified above) that broadcasts an ECHO($m_2$) or READY($m_2$) and the causal barrier imposed by that ECHO or READY includes $m_1$.
\end{itemize}
Thus any correct $p_i$ will not broadcast READY($m_2$) (and hence not BCRB-ready it and enqueue it for BCRB-delivery) until $m_1$ has been locally BCRB-readyed and enqueued for BCRB-delivery earlier. This guarantees (strong) safety.

Note that pushing the causal barrier checks inside Bracha's BRB does not alter its properties; it simply delays when a process transitions to the next phase (i.e., after the causal barrier of a received ECHO or READY is satisfied, for each ECHO or READY in Bracha's required quorum sizes). 

A critical question arises: how is the causal barrier per ECHO and READY message computed and transmitted? Our algorithm {\em implicitly, indirectly, dynamically}, and {\em incrementally} computes/maintains the causal barrier using smart data structures and leveraging the semantics of the ACK message (required for sending a decryption share), without actually transmitting any data or metadata for this end. 
\begin{itemize}
\item Each process $p_i$ maintains a $n \times n$ matrix $D$. When ACK($(p_a,\- seq_a, m))$ is received from $p_s$ with a valid share, it updates $D[a,s]$ $\gets$ $seq_a$. 
\item Process $p_i$ also uses a size $n$ vector $LD$ where $LD[k]$ $\gets$ $seq_k$ when it has received $2f+1$ READYs for  $(M \gets (k, seq_k, m))$ and the causal barriers on these READYs are satisfied and $M$ is consequently BCRB-readyed. 
\item Now, when an ECHO or READY for $(M' \gets (j, seq_j, m))$ is received from $p_s$, $p_i$ {\em creates} a $n$-vector causal barrier $CB$ for it, by setting $\forall a, M'.CB[a] \gets D[a,s]$ and storing it locally. $M'.blocked$ is set to ($\neg (\forall a \in [1,n], M'.CB[a] \le LD[a])$) and $M'.from \gets s$. (There will be multiple instances of $M'$ at $p_i$, each with a potentially different $M'.from$ value.) 
\item If $M'.blocked=1$, it is reevaluated whenever any entry in $LD$ is incremented. When the causal barrier is satisfied for the Bracha-required quorum of ECHO or READY messages for $M'$, $p_i$ broadcasts READY($M'$).
\end{itemize}
As our correctness proof implies, Byzantine processes cannot game this logic.



\section{Correctness Proof}
\label{sec:proof}
\subsection{Safety}

\begin{observation}[Impact of ACK on causal barrier]
\label{obs:ackDcbLD}
When an ACK is sent and when an ACK is received:
\begin{enumerate}
    \item When ACK($M \gets (j, seq_j, m)$) is sent by $p_z$, $LD_z[j] = seq_j$.
    \item When ACK($M \gets (j, seq_j, m)$) is received from $p_z$, $D[j,z] \gets seq_j$.
    \item Subsequently on receiving any ECHO/READY($M' (\gets (k, seq_k, m')$) from $p_z$, $M'.CB[j] \gets D[j,z] \ge seq_j$ and $M'.from \gets z$.
\end{enumerate}
\end{observation}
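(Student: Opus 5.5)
The plan is to read each of the three items directly off the code of Algorithm~\ref{alg:brachacb}. The only step that needs real argument is the monotonicity that item~3 depends on.

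For item~1, I observe that the only handler that broadcasts an ACK is the one triggered by receiving $2f+1$ READYs with $LD[j]=seq_j-1$. In that handler, line~\ref{startingcb} assigns $LD[j]\gets seq_j$ before computing the share and broadcasting ACK($p_j,seq_j,m,share,i$). The handlers execute atomically at $p_z$, so when the ACK is sent we have $LD_z[j]=seq_j$. Here I assume $p_z$ is correct, since only then is its code meaningful.

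For item~2, I look at the ACK-reception handler. Since ACKs carry a valid share, the handler executes $D[j,z]\gets\max(D[j,z],seq_j)$. Strictly this gives $D[j,z]\ge seq_j$ rather than equality, and I would state the item in that form, or note that equality holds when the ACKs from $p_z$ for sender $j$ arrive in increasing sequence order. The latter does hold for a correct $p_z$: the guard $LD[j]=seq_j-1$ makes $p_z$ ACK the messages of $p_j$ in increasing $seq_j$ order, and the channels are FIFO.

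For item~3, the key fact is that $D[j,z]$ is non-decreasing. Its only update is the $\max$ assignment, so once the ACK for $(j,seq_j,m)$ from $p_z$ has been processed, $D[j,z]\ge seq_j$ holds forever after. Then any later ECHO or READY for $M'$ from $p_z$ triggers the reception handler, which sets $M'.CB[a]\gets D[a,z]$ for all $a$, in particular $M'.CB[j]=D[j,z]\ge seq_j$, and sets $M'.from\gets z$.

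The main subtlety I expect is the meaning of ``subsequently.'' I would read it as ordering by receipt at $p_i$, not by sending at $p_z$. For the later safety use, FIFO channels ensure that if a correct $p_z$ sends the ACK before sending an ECHO or READY, then $p_i$ receives them in that order. So I would add a remark that, for correct $p_z$, send-order implies receive-order, and the monotonicity argument then applies at $p_i$.
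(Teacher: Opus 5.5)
Your proposal is correct and takes the same approach as the paper: the paper states the observation without proof, intending it to be read directly off the ACK-sending, ACK-receiving, and ECHO/READY-receiving handlers of Algorithm~\ref{alg:brachacb}, as you do. Your extra points are consistent with how the observation is later used in the proof of Theorem~\ref{th:safety}: the code actually executes $D[j,z]\gets\max(D[j,z],seq_j)$, so item~2 is $D[j,z]\ge seq_j$, with equality for correct $p_z$ by the $LD[j]=seq_j-1$ guard and FIFO; and item~3 rests on the monotonicity of $D[j,z]$, with ``subsequently'' read as receipt order at $p_i$.
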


\begin{observation}[Impact of receiving ECHO/READY]
For ECHO\-/READY$((M \gets (p_j, seq_j, *)))$ that is received from correct $p_k$ by correct $p_i$, 
\begin{enumerate}
    \item $D_i[j,k]$ on receipt at $p_i$ equals $LD_k[j]$ at the time of sending of that ECHO/READY.
    \item At $p_i$, $\forall a \in [1,n], M.CB[a] = LD_k[a]$ at the time of sending that ECHO/READY.
    \item At $p_i$, when $M.blocked=0$, $\forall a \in [1,n], LD_i[a] \geq LD_k[a]$ at the time of sending $M$ at $p_k$.
\end{enumerate}
\label{obs:equality}
\end{observation}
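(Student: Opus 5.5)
The statement to prove is Observation~\ref{obs:equality}. The plan is to relate the ACK messages $p_k$ has sent up to the moment it sends the ECHO/READY to the ACK messages $p_i$ has received from $p_k$ by the moment that ECHO/READY arrives. The relation will come from the FIFO property of the authenticated point-to-point channel from $p_k$ to $p_i$, together with the way $p_k$ updates $LD_k$.

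\emph{Step 1: $LD_k$ is tracked exactly by the ACKs $p_k$ sends.} Since $p_k$ is correct, $LD_k[j]$ changes only in the handler that enqueues into $bcrb\_Q$. That handler requires $LD_k[j]=seq_j-1$, so $LD_k[j]$ only ever increases by one. In the same atomic step, $p_k$ sets $LD_k[j]\gets seq_j$ and broadcasts ACK$(p_j,seq_j,m,share,k)$ with a valid share. Hence, at any instant, $LD_k[j]$ equals the largest $seq_j$ over the ACK$(p_j,\cdot)$ messages $p_k$ has broadcast so far, or $0$ if there are none. This matches item 1 of Observation~\ref{obs:ackDcbLD}.

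\emph{Step 2: $D_i[j,k]$ is tracked exactly by the ACKs $p_i$ has received from $p_k$.} At $p_i$, $D_i[j,k]$ is modified only by ACKs received from $p_k$ with a valid share, and it takes the max of their sequence numbers. Authentication guarantees that no other process can forge an ACK attributed to $p_k$. Also, $p_k$ is correct, so it sends exactly one ACK per $(j,seq_j)$, and its shares are valid. Because the channel is FIFO, when the ECHO/READY from $p_k$ is received, $p_i$ has already received exactly those ACKs that $p_k$ sent before sending the ECHO/READY. Taking the max over this set and combining with Step 1 gives item 1: $D_i[j,k]=LD_k[j]$ at the time of sending. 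The one technical point is that the ECHO/READY is broadcast to all at a single point of $p_k$'s execution, so its order relative to each ACK is the same on every outgoing channel.

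\emph{Step 3: the remaining items.} Item 2 follows because the ECHO/READY handler sets $M.CB[a]\gets D_i[a,k]$ for all $a$, at the moment of receipt, and item 1 applies to every $a$. For item 3, $M.blocked$ is set to $0$ only when $\forall a,\ M.CB[a]\le LD_i[a]$ holds. This happens either at receipt or at a later reevaluation through $is\_blocked$. Entries of $LD_i$ never decrease, so the inequality continues to hold afterward, and item 3 follows from item 2.

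The main obstacle is Step 2, specifically justifying exact equality rather than just $\ge$ or $\le$. Equality needs three ingredients together: FIFO ordering, the fact that a correct $p_k$ sends its ACKs in increasing $seq_j$ order for each $j$ (so the max equals the last value), and atomicity of the enqueue handler in which $LD_k$ is updated and the ACK is broadcast.
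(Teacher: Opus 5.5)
Your proposal is correct and follows the reasoning the paper relies on. The paper states this observation without a separate proof and uses it via Observation~\ref{obs:ackDcbLD} plus FIFO channels: $LD_k[j]$ is tracked by the ACKs $p_k$ has sent, and $D_i[j,k]$ by the ACKs received from $p_k$. FIFO makes these two sets coincide when the ECHO/READY is received, and items 2 and 3 follow because $M.CB$ copies column $k$ of $D_i$ and $LD_i$ never decreases. You also name the assumptions needed for exact equality rather than just $\ge$: atomicity of the handler, and attributing an ACK's sender by the authenticated channel. The paper leaves these unstated.
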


We define a {\em BCRB-ready} event of $M=(j, seq_j, m)$ at $p_i$ as the event at which $p_i$ executes $LD_i[j]$ to $seq_j$ (line~\ref{startingcb}) in Algorithm~\ref{alg:brachacb}. Similarly, we define a $plaintext\_ready$ event  of $M=(j, seq_j, m)$ at $p_i$ as the event at which $p_i$ executes decryption (line~\ref{line:ptr}) in Algorithm~\ref{alg:brachacb}.

{\bf Henceforth,  we may omit the payload (third parameter) and subsequent parameters on algorithm messages for brevity.}

\begin{lemma}
\label{lemma:agree}
If a correct process $p_i$ BCRB-readys $M$, every correct process also (eventually) BCRB-readys $M$.
\end{lemma}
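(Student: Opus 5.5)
We prove Lemma~\ref{lemma:agree} by adapting Bracha's totality argument. The new ingredients are the causal-barrier gating (the $blocked$ flags) and the condition $LD[j]=seq_j-1$. The overall strategy is an induction on the order in which BCRB-ready events occur at correct processes; equivalently, we take the earliest counterexample and derive a contradiction.

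First, suppose correct $p_i$ BCRB-readys $M=(j,seq_j,m)$. Then $p_i$ received READY$(M)$ from $2f+1$ distinct processes, at least $f+1$ of them correct. Every READY from a correct process eventually reaches every correct $p_k$ over the reliable channels. As in Bracha's proof, quorum intersection shows that correct processes never broadcast READY for two different payloads with the same $(j,seq_j)$: at most $f$ processes are Byzantine and two echo quorums of size greater than $(n+f)/2$ intersect in a correct process. Hence it suffices to show three things at every correct $p_k$: (a) the $f+1$ correct READYs eventually become unblocked, so that $p_k$ itself broadcasts READY$(M)$; (b) $p_k$ eventually receives $2f+1$ READY$(M)$, since all $n-f\ge 2f+1$ correct processes will broadcast it by (a); and (c) eventually $LD_k[j]=seq_j-1$.

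The key step is unblocking. By Observation~\ref{obs:equality}, a READY$(M)$ sent by a correct $p_s$ carries at $p_k$ the barrier $CB[a]=LD_s[a]$, taken at the time of sending. Each increment of $LD_s[a]$ up to that value is a BCRB-ready event at $p_s$, and each such event occurs strictly before $p_s$ sends READY$(M)$, which in turn precedes $p_i$'s BCRB-ready of $M$. By the induction hypothesis, applied to these earlier BCRB-ready events at correct processes, $p_k$ eventually BCRB-readys every message $(a,x)$ with $x\le LD_s[a]$. The $LD$ entries only increase, and blocked flags are reevaluated on every $LD$ increment, so eventually $CB\le LD_k$ and the READY becomes unblocked. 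Condition (c) follows the same way. The condition $LD_i[j]=seq_j-1$ held at $p_i$, so $p_i$ had BCRB-readied $(j,seq_j-1)$ earlier. The induction hypothesis therefore gives $LD_k[j]\ge seq_j-1$ eventually at $p_k$, and $LD_k[j]$ cannot exceed $seq_j-1$ before $M$ is readied, because updates go strictly one sequence number at a time.

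We expect the main obstacle to be making the induction well-founded. The measure is real time (or a causal order) on BCRB-ready events at correct processes. We must also check that the barrier entries coming from $D[a,s]$ refer only to values set through valid ACKs from $s$. For a correct $s$, those values equal $LD_s$ by Observation~\ref{obs:ackDcbLD}, so Byzantine processes cannot inflate the barriers attached to correct processes' READYs. Barriers on Byzantine READYs are irrelevant, because we rely only on the $f+1$ correct ones and on all correct processes' READYs. Finally, ties among the per-sender $seq$ chains require the induction to run over pairs $(j,seq_j)$ ordered by the time of the first correct BCRB-ready. This ordering is well-founded because every BCRB-ready that a given event relies on happens strictly before it.
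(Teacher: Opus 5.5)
Your proposal is correct and follows essentially the paper's route: you induct over BCRB-ready events at correct processes in global-time order, and Observation~\ref{obs:equality} identifies the barriers on the $f+1$ correct READY$(M)$ messages with their senders' $LD$ vectors at sending time, each entry of which comes from a strictly earlier BCRB-ready event, so the induction hypothesis clears those barriers at every correct process, all $n-f\ge 2f+1$ correct processes broadcast READY$(M)$, and $M$ is readied. Your explicit handling of the $LD[j]=seq_j-1$ guard and of payload uniqueness per $(j,seq_j)$ goes beyond the paper's proof, which passes over both silently; note that the uniqueness step needs Bracha's rule that a correct process echoes only one INIT per $(j,seq_j)$, which Algorithm~\ref{alg:brachacb} leaves implicit.
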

\begin{proof}
We use the interleaved event model (rather than the poset model) of the distributed execution on the set of BCRB-ready events {\em at correct processes}, i.e., we project such events on a common global time axis. Let the $x$th BCRB-ready event occur at global time $t_x$. We show the following Invariant by induction: 
\begin{itemize}
    \item (Invariant/Induction hypothesis:) if $M$ is BCRB-readyed at $t_x$ (at a correct process $p_i$), it will be BCRB-readyed at all correct processes $p_c$.
\end{itemize}
Process $p_i$ must have received $2f+1$ READY($M$), at least $f+1$ having been broadcast by correct $p_k$.
%
All correct $p_c$ receive $f+1$ READY($M$) broadcasts by the $p_k$, which must have done the broadcast before $t_x$. When $p_c$ receives such a broadcast:
\begin{itemize}
\item For Base case $x=1$, $LD_c[a] \ge LD_k[a]$ = 0 because $p_k$ did the broadcast before $t_1$. So $p_c$ will broadcast READY($M$). This happens at each of the $2f+1$ $p_c$s. On receipt of $2f+1$ READY($M$), each $p_c$ will BCRB-ready $M$. 
\item For $x>1$: Assume I.H. true for all $y<x$.  Eventually $LD_c[a] \geq LD_k[a] 
(=, \text{ say}, seq_a)
$ at the time of $p_k$'s broadcast which happened before $t_x$.  This is justified as follows. $LD_k[a] = seq_a$ means $M_a (\gets (a, seq_a))$ was BCRB-readyed at $p_k$ (per definition). Further, this happened at some $t_y$ which happened before $p_k$'s broadcast which happened before $t_x$. Hence $y<x$.  From I.H., $M_a$ is guaranteed to be eventually BCRB-readyed at each $p_c$.
At $p_c$, denote as $M_{k'}$ the READY($M$) received from $p_{k'}$. As eventually at $p_c$, for each of the $f+1$ $M_{k'}$, $\forall a, LD_c[a] \geq M_{k'}.CB[a] = LD_{k'}[a]$ at the time $p_{k'}$ broadcast $M_{k'}$, (the equality follows from Observation~\ref{obs:equality}), $p_c$ will broadcast READY($M$). This happens at each of the $2f+1$ $p_c$s.
In addition, on receipt of $2f+1$ READY($M$), each $p_c$ will BCRB-ready $M$. 
\end{itemize}
The lemma follows.
%
\end{proof}

\begin{theorem}[BCRB Safety]
If $m \to m'$
then no correct process $p_l$ \texttt{bcrb\_deliver}s $m'$ before \texttt{bcrb\_deliver} $m$.
\label{th:safety}
\end{theorem}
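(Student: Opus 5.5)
We reduce strong safety to a statement about BCRB-ready events. At any correct $p_l$, messages are enqueued in $bcrb\_Q$ exactly at their BCRB-ready events, and delivery is strictly FIFO from the head of $bcrb\_Q$ (a message is dequeued only when all entries ahead of it are also decrypted). So it suffices to show the following. \emph{If $m\to m'$, then at every correct $p_l$ the BCRB-ready event of $m$ precedes that of $m'$.} By rule 3 of Definition~\ref{def:hbbyz} and transitivity of local precedence, it is enough to prove this for the two base rules and then induct on the length of the $\to$ chain.

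\textbf{Rule 1 (same sender).} The guard $LD[j]=seq_j-1$ on the BCRB-ready step forces a correct $p_l$ to BCRB-ready the messages of $p_j$ in increasing sequence-number order. This is exactly the order defining $\to$ among them.

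\textbf{Rule 2.} Here some process $p_i$ (possibly Byzantine) had $plaintext\_ready(m)$ before $\texttt{bcrb\_broadcast}(m')$, where $m=(j,seq_j,\cdot)$. We first use the threshold scheme. Decryption of $m$ requires $n-f$ valid shares from distinct processes, and at most $f$ of these are Byzantine. Hence at least $n-2f\ge f+1$ correct processes $p_z$ sent ACK$(m)$ before $plaintext\_ready(m)$, and therefore before $m'$'s INIT was sent. Call this set $Z$.

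Next, consider any correct $p_l$ that BCRB-readys $m'$. It has broadcast READY$(m')$, so there is a first correct process $p_r$ to broadcast READY$(m')$. Its trigger cannot be $f+1$ READYs, since at least one of those would come from a correct process that broadcast READY earlier. So $p_r$ was triggered by more than $(n+f)/2$ unblocked ECHO$(m')$. Among those echoers, more than $(n+f)/2-f$ are correct. Since $|Z|\ge n-2f$, this correct echo quorum intersects $Z$, because $(n-f)/2 + (n-2f) > n$ when $n>3f$; this counting is the part to verify carefully.

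Let $p_z$ lie in the intersection. Because $p_z$ sent ACK$(m)$ before $m'$'s INIT was sent, it sent ACK$(m)$ before it sent ECHO$(m')$. FIFO channels then ensure $p_r$ processes ACK$(m)$ from $p_z$ before ECHO$(m')$ from $p_z$. By Observation~\ref{obs:ackDcbLD}, that ECHO's $CB[j]\ge seq_j$. Unblocking it requires $LD_r[j]\ge seq_j$, i.e., $p_r$ has BCRB-readied $m$ before broadcasting READY$(m')$.

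The remaining gap is to extend this from $p_r$ to every correct $p_l$. The clean route is to show that every READY$(m')$ broadcast by a correct process carries the dependency. Each correct READY sender either had its own ECHO quorum (apply the same intersection argument) or had $f+1$ unblocked READYs, at least one from a correct process $p_k$. In the latter case we induct over the order of correct READY broadcasts. By Observation~\ref{obs:equality}(3), unblocking $p_k$'s READY gives $LD_{\text{receiver}}[j]\ge LD_k[j]\ge seq_j$ at the time of sending. The BCRB-ready of $m'$ at $p_l$ requires $p_l$'s own READY$(m')$, so $LD_l[j]\ge seq_j$ beforehand. By the sequential guard, this means $m$ itself was BCRB-readied at $p_l$ earlier, since $LD_l[j]$ only advances one sequence number at a time.

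\textbf{Main obstacle.} The hardest step is justifying that ACK$(m)$ from $p_z$ reaches $p_r$ before ECHO$(m')$ from $p_z$. The causal link passes through the possibly Byzantine $p_i$, so $p_z$'s ACK precedes $m'$'s INIT only in real time. Since $p_z$ echoes $m'$ only after receiving the INIT, that real-time order still gives local precedence at $p_z$, and FIFO then applies. The quorum-intersection arithmetic with $n>3f$ should be checked carefully as well.
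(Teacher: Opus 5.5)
Your proof follows the same route as the paper's. You reduce to the order of BCRB-ready events (enqueue order in $bcrb\_Q$) and handle same-sender pairs via the guard $LD[j]=seq_j-1$. For rule~2 you intersect the correct senders of decryption shares for $m$ with the correct ECHO$(m')$ senders, and use FIFO to get $CB[j]\ge seq_j$ at the first correct READY$(m')$ sender. You then induct over the order of correct READY$(m')$ broadcasts to cover the $f+1$-READY trigger; your appeal to Observation~\ref{obs:equality}(3) there is exactly the paper's Case-II argument. Your handling of the real-time order at $p_z$ (ACK$(m)$ before $t_1$, ECHO$(m')$ only after INIT$(m')$) is also the paper's.

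There is, however, a concrete error in the step you flagged. The inequality $\frac{n-f}{2}+(n-2f)>n$ is equivalent to $3n-5f>2n$, i.e.\ $n>5f$. It does not follow from $n>3f$: for $n=4$, $f=1$ the left side is $3.5<4$. Comparing against $n$ is the wrong benchmark, because both sets you intersect consist only of correct processes. You must compare against the number of correct processes, which is $n-f$ in the worst case, and $\frac{3n-5f}{2}>n-f \iff n>3f$; this is exactly the paper's count. If you prefer not to assume exactly $f$ faulty processes, intersect the full sets instead. The at least $n-f$ share senders and the more than $\frac{n+f}{2}$ ECHO senders, drawn from $n$ processes, have more than $\frac{n-f}{2}>f$ common members, so some common member is correct. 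With this fix the rest of your argument goes through as written.
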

\begin{proof}
If the senders of $m$ and $m'$ are the same, then sequence number of $m$ is less than sequence number of $m'$ (follows from Definition~\ref{def:hbbyz}).
It is straightforward to observe from the algorithm that messages are BCRB-readyed and hence BCRB-delivered in source-sequence number order at $p_l$. So in this case, the theorem follows.

If the senders are different, then 
there may exist a transitive sequence of messages $\langle m= m_1, m_2, \ldots, m_z = m' \rangle$ from $m$ to $m'$ that are related by: 
\texttt{plaintext\_ready}($m_{x}$) before \texttt{bcrb\_\-broadcast}($m_{x+1}$) at process $p_{i_x}$, type of orderings.
So it suffices to show, in general, that: if $p_i$ (whether correct or Byzantine) \texttt{plaintext\_ready}($m_1$) before \texttt{bcrb\_\-broadcast}($m_2$) then no correct process triggers \texttt{bcrb\_\-deliver} $m_2$ before triggering \texttt{bcrb\_\-deliver} $m_1$.

Let $m_1 \gets (p_j, seq_j)$. Let \texttt{plaintext\_ready}($m_1$) occur at $p_i$ at physical time $t_1$. Before $t_1$, $\geq n-f$ ACK($m_1$) were sent, at least $n-2f$ were broadcast by correct $p_k$. Let these $n-2f$ form quorum $U_1$. Such $p_k$ must have received $2f+1$ READY($m_1$) and BCRB-readyed $m_1$; hence $LD_k[j] \geq seq_j$ at that time. 

It is a well-known property of Bracha's BRB that before the first correct process broadcasts READY($m$), it must have received $\frac{n+f}{2}$ ECHO($m$). This property holds even in our algorithm because the modifications we introduce only conditionally delay the processing of received ECHOs and READYs.

After $t_1$, before the first correct $p_l$ can broadcast READY($m_2$), it must have received at least $\frac{n+f}{2}$ ECHO($m_2$), $\ge \frac{n-f}{2}$ of which were from correct processes and all of which were sent after $t_1$. Let these $\frac{n-f}{2}$ form quorum $U_2$.
Denote this condition leading to the definition of quorum $U_2$ as Case-I.

The sum $|U_1| + |U_2| = n-2f + \frac{n-f}{2} = \frac{3n-5f}{2} > n-f$, the number of correct processes. Hence and as $n > 3f$, $|U_1 \cap U_2| \geq 1$. Let any correct process in this intersection be denoted $p_z$. Clearly, $p_z$ broadcast ACK($m_1$) (before $t_1$) which is before it broadcast ECHO($m_2$) (after $t_1$). From Observation~\ref{obs:ackDcbLD} and FIFO channels, $D_l[j,z] \ge seq_j$ at $p_l$ on receiving this ACK($m_1$) which happens before receiving ECHO($m_2$) from $p_z$. On receiving this $(M' \gets)$ ECHO($m_2$), $p_l$ set $M'.CB[j] \gets D[j,z] \ge seq_j$ and $M'.from \gets z$. Clearing the causal barrier of $M'$ requires $LD_l[j] \ge M'.CB[j] \ge seq_j$ and that can happen only after $LD_l[j] \gets seq_j$. (This is bound to happen as $m_1$ will certainly be BCRB-readyed at $p_l$, by Lemma~\ref{lemma:agree}.)  Further, when $LD_l[j] \gets seq_j$, $p_l$ enqueued  $m_1$ in $bcrb\_Q$ for BCRB\_delivery (possibly pending decryption) and broadcast ACK($m_1$). Clearing the causal barrier of $M'$ is a prerequisite to broadcasting READY($m_2$). So $p_l$ enqueues $m_1$ and broadcasts ACK($m_1$) before it broadcasts READY($m_2$) and clearly before it can enqueue $m_2$. 

Alternatively, from Bracha's algorithm, before $p_l$ can broadcast READY($m_2$), it should have received $f+1$ READY($m_2$), of which $\ge 1$ READY($m_2$) must be from correct $p_{l'}$. Our algorithm also requires that the causal barriers of $f+1$ READY($m_2$) be satisfied before $p_l$ broadcasts READY($m_2$). Denote this alternative condition for broadcasting READY($m$) as Case-II.

Let $p_{l^x}$ denote the $x$th $p_l$ to broadcast READY($m_2$). We show the following invariant by induction:
\begin{itemize}
    \item (Invariant/Induction hypothesis:) $LD_{l^x}[j] \geq seq_j$ and $p_{l^x}$ has enqueued $m_1$ in $bcrb\_Q$ and broadcast ACK($m_1$) before $p_{l^x}$ broadcasts READY($m_2$). 
\end{itemize}
For $x=1$, Case-I holds and $LD_{l^x}[j] \geq seq_j$ and the rest of the Invariant holds as shown for Case-I. For $x>1$, either case applies. If the first case applies, clearly using the argument given for it, $LD_{l^x}[j] \geq seq_j$ and the invariant holds. For Case-II, if $p_{l'}$ was $p_{l^y}$, then $y<x$. By I.H., $LD_{l^y}[j] = LD_{l'}[j] \geq seq_j$ and the invariant holds for $p_{l'}$. Then
using identical reasoning as for Case-I but replacing $p_z$ by $p_{l'}$ and ECHO($m_2$) by READY($m_2$) and not referencing $t_1$, the invariant can be seen to hold for $p_{l^x}$.
%
%
Thus in Case-II, $p_l$ enqueues $m_1$ in $bcrb\_Q$ before it clears the causal barriers of $f+1$ READY($m_2$) that includes the one from $p_{l'}$, and broadcasts READY($m_2$). Once that happens and $2f+1$ READY($m_2$) have been received, $p_l$ enqueues $m_2$ in $bcrb\_Q$. 

So combining both cases, a correct process enqueues $m_1$ in $bcrb\_Q$ and BCRB-readys $m_1$ before it broadcasts READY($m_2$); once it has broadcast READY($m_2$) and received $2f+1$ READY($m_2$), it enqueues $m_2$ in $bcrb\_Q$ and BCRB-readys $m_2$ (after $m_1$). As the algorithm triggers \texttt{bcrb\_deliver} in FIFO order dequeuing from $bcrb\_Q$, the theorem follows.
\end{proof}

\begin{corollary}
\label{cor:ordering}
If $m_1 \to m_2$ then at any correct process $p_i$, $m_1$ is BCRB-ready before $m_2$ is BCRB-ready.
\end{corollary}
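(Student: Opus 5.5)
The corollary should follow by extracting the intermediate statement already established inside the proof of Theorem~\ref{th:safety}, and then chaining it along the sequence of messages that witnesses $m_1 \to m_2$. By Definition~\ref{def:hbbyz}, $m_1 \to m_2$ means there is a finite chain $m_1 = \mu_1, \mu_2, \ldots, \mu_z = m_2$. Each consecutive pair $(\mu_x, \mu_{x+1})$ is related either by rule~1 (same sender, $\mu_x$ with smaller sequence number) or by rule~2 (some process $p_{i_x}$ has \texttt{plaintext\_ready}($\mu_x$) before \texttt{bcrb\_broadcast}($\mu_{x+1}$)). Since ``BCRB-ready before'' at a fixed correct $p_i$ is a strict order on events of $p_i$, it is transitive. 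It therefore suffices to show, for each single-step relation, that $\mu_x$ is BCRB-ready at $p_i$ before $\mu_{x+1}$ is.

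For a rule-1 step, I would use the guard $LD[j]=seq_j-1$ in the line labelled~\ref{condcb}. A correct $p_i$ can execute line~\ref{startingcb} for $(j,seq)$ only when $LD_i[j]=seq-1$. Because $LD_i[j]$ only increases by one per BCRB-ready event for sender $j$, the BCRB-ready events for sender $j$ at $p_i$ occur in strictly increasing sequence number order. A simple induction on $seq$ then shows that $(j,seq')$ with $seq'<seq$ is BCRB-ready before $(j,seq)$.

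For a rule-2 step, I would reuse verbatim the Case-I/Case-II argument from the proof of Theorem~\ref{th:safety}. Its conclusion, before reference to $bcrb\_Q$ delivery order, is precisely this: every correct process sets $LD[j]\ge seq_j$ for $\mu_x=(j,seq_j)$ (i.e., BCRB-readys $\mu_x$) before it broadcasts READY($\mu_{x+1}$). Meanwhile, BCRB-readying $\mu_{x+1}$ at $p_i$ requires that $p_i$ has itself broadcast READY($\mu_{x+1}$), by the guard at line~\ref{condcb}. Combining these gives the required order. One subtlety is that $LD_i[j]\ge seq_j$ might in principle be reached without $\mu_x$ itself being BCRB-ready. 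The rule-1 argument rules this out: $LD_i[j]$ passes through the value $seq_j$ exactly at the BCRB-ready event of the unique message $(j,seq_j)$ readied at $p_i$.

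The main obstacle is the rule-2 step, specifically making sure that the invariant from Theorem~\ref{th:safety} is stated about BCRB-ready events rather than deliveries, and that it covers every correct process, not only the first to broadcast READY. That invariant was proved by induction over the order $p_{l^1}, p_{l^2},\ldots$ of correct READY($\mu_{x+1}$) broadcasters. I would simply cite it, noting that $p_i$ appears in that order whenever it BCRB-readys $\mu_{x+1}$. If $p_i$ never BCRB-readys $\mu_{x+1}$, the claim is vacuous. Lemma~\ref{lemma:agree} additionally guarantees that $\mu_x$ does get BCRB-ready at $p_i$, so the barrier can be cleared.
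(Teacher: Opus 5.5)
Your proposal is correct and follows essentially the paper's route. The paper gives no separate proof: the corollary is read off from the closing step of the proof of Theorem~\ref{th:safety} (every correct process BCRB-readys $m_1$ before it broadcasts READY($m_2$), and broadcasting READY($m_2$) is a precondition for BCRB-readying $m_2$), together with the source-sequence-number order enforced by the guard $LD[j]=seq_j-1$ for the same-sender case. Two points in your version are slightly more careful than what the paper leaves implicit: you handle mixed rule-1/rule-2 chains step by step, and you use Lemma~\ref{lemma:agree} together with the one-ready-per-$(j,seq_j)$ guard to show that the message readied when $LD_i[j]$ reaches $seq_j$ is $\mu_x$ itself.
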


\subsection{Validity, Agreement, and Integrity}
\label{sec:vai}
\begin{theorem}[Validity]
\label{th:validity}
If a correct process $\texttt{bcrb-broadcast}$s $m_1$, all correct processes $\texttt{bcrb-deliver}$ $m_1$.
\end{theorem}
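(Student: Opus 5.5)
The proof splits into two parts. First, every correct process eventually BCRB-readys $m_1$. Second, once that holds for all correct processes, each of them decrypts and dequeues $m_1$.

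\textbf{Part 1: some correct process BCRB-readys $m_1$.} Let $m_1=(p_j,seq_j)$ with correct sender $p_j$. By Lemma~\ref{lemma:agree}, it suffices to show that one correct process BCRB-readys $m_1$. I proceed by induction on $seq_j$, i.e., over the sequence of correct $p_j$'s own broadcasts, treating the BCRB-ready gate $LD[j]=seq_j-1$ as discharged by the induction hypothesis for $seq_j-1$. All $n-f\ge 2f+1$ correct processes receive INIT($m_1$) and broadcast ECHO($m_1$). The key issue is the causal barriers attached to these ECHOs at a correct receiver $p_i$. By Observation~\ref{obs:equality}, the barrier of an ECHO from correct $p_k$ equals $LD_k$ at sending time. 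Every entry $LD_k[a]=s$ means $(a,s)$ was BCRB-readyed at correct $p_k$, so by Lemma~\ref{lemma:agree} it is eventually BCRB-readyed at $p_i$. Hence eventually $LD_i\ge LD_k$ and each such ECHO becomes unblocked.

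There is a gap here. Lemma~\ref{lemma:agree} is stated as if unconditional, but its own readiness step also needs $LD_c[j]=seq_j-1$. I would make explicit that its induction over BCRB-ready events also yields the per-sender prefix, since a correct process BCRB-readys $(a,s)$ only after $(a,s-1)$. The re-evaluation loop in the READY handler updates $blocked$ flags whenever $LD$ grows, so unblocking does occur. Eventually, then, $p_i$ holds more than $(n+f)/2$ unblocked ECHO($m_1$); this uses $n-f>(n+f)/2$ because $n>3f$. So $p_i$ broadcasts READY($m_1$).

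The same argument applied to the READYs from the $\ge 2f+1$ correct processes shows $p_i$ eventually receives $2f+1$ READY($m_1$). These need not be unblocked for the final step, because the BCRB-ready condition does not check $blocked$. The sequential condition $LD_i[j]=seq_j-1$ holds by the induction on $seq_j$, since $p_j$ is correct and uses consecutive sequence numbers. Therefore $p_i$ BCRB-readys $m_1$, and by Lemma~\ref{lemma:agree} so does every correct process.

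\textbf{Part 2: decryption and delivery.} Every correct process, on BCRB-ready, broadcasts ACK($m_1$) with a valid share. So each correct $p_i$ eventually gets $n-f$ valid shares, decrypts, and sets the entry's flag to 1. For delivery, $m_1$ must reach the head of $bcrb\_Q$. Every entry ahead of it was BCRB-readyed at $p_i$, hence by Lemma~\ref{lemma:agree} at all correct processes. Each such entry therefore also gathers $n-f$ valid shares and gets flag 1. By induction on queue position, the while-loop eventually dequeues $m_1$ and calls \texttt{bcrb\_deliver}.

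\textbf{Main obstacle.} The delicate step is the unblocking argument in Part 1. A correct ECHO's barrier only references messages already BCRB-readyed at some correct process, and Lemma~\ref{lemma:agree} must be used in a form that respects the per-sender sequence gate. If necessary, I would strengthen Lemma~\ref{lemma:agree}'s invariant to state that BCRB-ready events at correct processes occur in per-sender sequence order. That strengthened invariant makes the prefix closure immediate.
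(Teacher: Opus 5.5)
Your proof is correct and follows essentially the same route as the paper's: ECHOs from correct processes are eventually unblocked because their barriers equal the senders' $LD$ vectors, whose entries reach every correct process by Lemma~\ref{lemma:agree}, and the READY, BCRB-ready, ACK-share and queue-head steps then match the paper's argument. Your induction on $seq_j$ and your per-sender-order strengthening of Lemma~\ref{lemma:agree} make explicit the $LD[j]=seq_j-1$ gate that the paper's proof skips; note only that, as in the paper, your Part~2 step for queue entries ahead of $m_1$ needs Lemma~\ref{lemma:agree} for the exact payload rather than just the sequence number, and this relies on Bracha's rule that a correct process echoes only the first INIT for a given $(p_j,seq_j)$, a rule the pseudocode does not state explicitly.
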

\begin{proof}
When correct $p_i$ does $\texttt{bcrb-broadcast}(m_1)$, all correct processes $p_j$ receive INIT($m_1$) and broadcast ECHO($m_1$). 
ECHO($m_1$) from all $n-f$ correct processes will eventually be processed by all correct $n-f$ $p_k$s because by Lemma~\ref{lemma:agree}, all prior messages $m'$ BCRB-readyed by any $p_j$ before it broadcast ECHO($m_1$) will eventually be BCRB-readyed by all $p_k$. Once that happens at any $p_k$ (for ECHO($m_1$) from $\frac{n-f}{2}$ $p_j$), ($\forall a$) $LD_k[a] \ge LD_j[a]$ at the time $p_j$ broadcast ECHO($m_1$), hence those ECHO($m_1$) from all those $p_j$ get unblocked and $p_k$ then broadcasts READY($m_1$). Alternately, $p_k$ may receive $f+1$ READY($m_1$)s which each become unblocked (i.e., $m_1.blocked=0$) sooner, in which case also, it broadcasts READY($m_1$). 
In any case, all correct $p_c$ would broadcast READY($m_1$) and receive $2f+1 \le n-f$ READY($m_1$), clearing the way for BCRB-ready of $m_1$ and enqueuing $m_1$ in $bcrb\_Q_c$. All the correct $p_c$ upon BCRB-ready of $m_1$ would also broadcast the decryption share on ACK($m_1$); on receipt of these $n-f$ (valid) shares, $m_1$ would be flagged ready for BCRB-delivery. Entries $m^*$ ahead of it in the queue (which would have already been BCRB-readyed), whether from correct or Byzantine processes, would also be flagged ready eventually. This is because, by Lemma~\ref{lemma:agree}, $m^*$ would also have been BCRB-readyed at all $n-f$ correct processes and which would then broadcast ACK($m^*$) with their decryption share, clearing the way for flagging $m^*$ in $brcb\_Q$. Flagged entries are dequeued from head of $bcrb\_Q$ and $\texttt{bcrb\_deliver}$ is triggered. The theorem follows.
\end{proof}

\begin{theorem}[Agreement]
If a correct process $p_i$ BCRB-delivers $M$, every correct process also BCRB-delivers $M$.
\label{th:agree}
\end{theorem}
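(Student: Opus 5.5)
The plan is to reduce Agreement to Lemma~\ref{lemma:agree}, plus an argument that every entry of the queue eventually gets flagged, much as in the last part of the proof of Theorem~\ref{th:validity}. Suppose correct $p_i$ BCRB-delivers $M=(j,seq_j,m)$. BCRB-delivery only happens by dequeuing $M$ from $bcrb\_Q_i$, and $M$ is enqueued only at the BCRB-ready event (line~\ref{startingcb}). So $p_i$ BCRB-readyed $M$, and by Lemma~\ref{lemma:agree} every correct $p_c$ eventually BCRB-readys $M$. At that event $p_c$ enqueues $M$ with flag $0$ in $bcrb\_Q_c$ and broadcasts ACK($M$) carrying its decryption share $dec\_share(m,SK_c)$.

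Next I show that $M$ eventually becomes flagged at each correct $p_c$. All $n-f$ correct processes broadcast ACK($M$) with valid shares, and channels are reliable. Hence every correct $p_c$ eventually holds $|shares[M]|\ge n-f$ valid shares from distinct senders. Since $M$ is in $bcrb\_Q_c$ with flag $0$, the ACK handler fires. The $(n-f,n)$ threshold scheme then yields the plaintext, and the entry is replaced with flag $1$. Agreement on content is the delicate point. Correct processes might enqueue different payloads for the same $(j,seq_j)$ only if a Byzantine sender equivocated. The standard Bracha quorum-intersection argument rules this out, since $2f+1$ READYs need $\frac{n+f}{2}$ ECHOs at the first correct READY sender, and two such quorums share a correct process. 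This property survives because our modifications only delay transitions. So all correct processes enqueue the same $m$ under the identifier $(j,seq_j)$, and they decrypt the same ciphertext to the same plaintext.

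The main obstacle is that the head of $bcrb\_Q_c$ may be blocked by an earlier entry $M^*$ that is never flagged. That would stall delivery of $M$ even though $M$ itself is flagged. I handle this as in Theorem~\ref{th:validity}: every entry $M^*$ ahead of $M$ in $bcrb\_Q_c$ was BCRB-readyed by the correct process $p_c$. By Lemma~\ref{lemma:agree}, $M^*$ is BCRB-readyed at all correct processes, and each of them broadcasts ACK($M^*$) with a valid share. So $p_c$ collects $n-f$ shares for $M^*$ and flags it. Only finitely many entries precede $M$ in the queue, so an induction over queue position shows that all of them and $M$ itself eventually carry flag $1$. The while loop then dequeues them in order and triggers \texttt{bcrb\_deliver}($M$) at $p_c$. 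Integrity's at-most-once part follows from the guard $LD[j]=seq_j-1$, but that is not needed here.
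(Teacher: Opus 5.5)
Your proposal is correct and follows essentially the same route as the paper: use Lemma~\ref{lemma:agree} to get $M$ BCRB-readyed and enqueued at every correct process, then use the $n-f$ correct ACK shares to flag $M$, and apply the same lemma to every entry ahead of $M$ in $bcrb\_Q$ so that the head of the queue is never permanently blocked. Your separate equivocation and quorum-intersection paragraph is redundant, because Lemma~\ref{lemma:agree} is already stated for the full triple $M=(j,seq_j,m)$; that Bracha argument would also need an echo-once-per-$(p_j,seq_j)$ rule, which the pseudocode does not state explicitly.
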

\begin{proof}
  When correct $p_i$ BCRB-delivers $m$, it must have already BCRB-readyed $m$ and enqueued it in $bcrb\_Q$. By Lemma~\ref{lemma:agree}, all correct $p_c$ also BCRB-ready $m$ and enqueue it in $bcrb\_Q$. Then all correct $p_c$ broadcast their decryption share on ACK($m$); on receipt of these (valid) shares, $m$ would be flagged ready for BCRB-delivery at all correct $p_c$. Entries $m^*$ ahead of it in the queue, whether from correct or Byzantine processes, would also be flagged ready eventually. This is because, by Lemma~\ref{lemma:agree}, $m^*$ would also be (eventually) BCRB-readyed at all $n-f$ correct processes which would then broadcast ACK($m^*$) with their decryption share, clearing the way for flagging $m^*$ in $brcb\_Q$ locally. Flagged entries are dequeued from head of $bcrb\_Q$ and $\texttt{bcrb\_deliver}$ is triggered. The theorem follows. 
\end{proof}

\begin{theorem}[Integrity]
\label{th:integrity}
A message $m$ is BCRB-delivered at most once by each correct process, and if the sender is correct, then only if it was BCRB-broadcasted by the sender.
\end{theorem}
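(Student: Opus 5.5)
We split the statement into its two clauses. The first is \emph{at most once}; the second is that, for a correct sender, any delivered message was actually \texttt{bcrb\_broadcast} by that sender. Both follow from the structure of Algorithm~\ref{alg:brachacb} rather than from the causal-barrier machinery.

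For the \emph{at most once} clause, we trace the path to \texttt{bcrb\_deliver}. A message $M=(p_j,seq_j,m)$ is delivered only by being dequeued from $bcrb\_Q$, and each entry is dequeued at most once. It therefore suffices to show that $M$ is enqueued at most once at a correct $p_i$.

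Enqueueing happens only in the handler guarded by $LD[j]=seq_j-1$, and that handler immediately sets $LD[j]\gets seq_j$. No line of the algorithm decreases $LD$, so afterwards $LD[j]\ge seq_j$ forever and the guard can never again hold for sequence number $seq_j$ from $p_j$. Hence at most one entry with identifier $(j,seq_j)$ is ever enqueued. This rules out both duplicate delivery of the same $M$ and delivery of two different payloads under the same $(j,seq_j)$.

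The replacement step (flag $0\to1$) acts in place on the existing entry and creates no new one. A repeated ACK-quorum trigger also causes no duplicate, because its guard requires the entry to be in $bcrb\_Q$ with $flag=0$.

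For the correct-sender clause, suppose correct $p_i$ delivers $(p_j,seq_j,plaintext)$ with $p_j$ correct. Then $p_i$ received $2f+1$ READY$(p_j,seq_j,m)$, at least $f+1$ of them from correct processes. We then follow the standard Bracha chain.
\begin{enumerate}
\item The first correct process to broadcast READY$(p_j,seq_j,m)$ cannot have done so via the $f+1$ READY rule. It must have used the echo rule, so it saw more than $(n+f)/2$ ECHO$(p_j,seq_j,m)$, more than $(n-f)/2>0$ of them from correct processes. The paper already uses this fact in Theorem~\ref{th:safety}, noting that the causal-barrier checks only delay transitions.
\item A correct process sends ECHO$(p_j,seq_j,m)$ only upon receiving INIT$(p_j,seq_j,m)$.
\item Channels are authenticated, so an INIT claiming origin $p_j$ must come from $p_j$. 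I would state explicitly that a correct process relays INIT only if it arrives on the channel from $p_j$. The pseudocode leaves this implicit, and making it explicit is the standard reading.
\item Correct $p_j$ sends INIT$(p_j,seq_j,m)$ only in the broadcast handler, where $m=encrypt(payload,PK)$ for an application payload. Since $seq_j$ is incremented on each broadcast, each $seq_j$ corresponds to exactly one invocation.
\item Finally, the delivered plaintext equals $payload$ by the correctness of the $(n-f,n)$ threshold decryption. The shares are checked for validity, so at least $n-f$ valid shares combine to the true decryption of $m$.
\end{enumerate}

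The main obstacle is the first step of this chain: making the Bracha quorum argument rigorous when READY emission is gated by \emph{blocked} flags. I will handle it as the paper does. The gating only adds conjuncts to the trigger conditions, so any READY emitted in Algorithm~\ref{alg:brachacb} would also be emitted under Bracha's original conditions. The classical property that the first correct READY is preceded by an ECHO quorum containing a correct echoer therefore carries over unchanged.

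A minor secondary point is the guard $LD[j]=seq_j-1$. It ensures that no "gap" entries exist, but it is not needed for at-most-once delivery; monotonicity of $LD$ alone suffices there.
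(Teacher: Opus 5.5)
Your proof is correct and follows essentially the same route as the paper. At-most-once comes from the monotone update $LD_i[j]\gets seq_j$ that guards the single enqueue into $bcrb\_Q$. The correct-sender clause comes from authenticated channels together with the fact that some correct process must have processed an INIT genuinely sent by $p_j$ before anything can be BCRB-readyed.

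Where the paper is terse, you make things explicit: the Bracha echo-quorum chain, the INIT-sender check, and the fact that $LD$ advances at enqueue rather than at delivery. You also do not need the paper's extra remark about decryption shares.
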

\begin{proof}
At-most-once delivery is guaranteed because $\mathbf{LD}_i[j]$ strictly increments upon delivery as messages are delivered in source-sequence order, preventing any duplicate processing. Authenticity for correct senders follows because correct processes only send messages via \texttt{bcrb\_broadcast}, and the system guarantees sender authenticity; a Byzantine process cannot forge messages from a correct sender due to authenticated channels and at least one correct process needs to be tricked into processing an INIT from a forged sender in order for the algorithm to proceed to BCRB-delivery. Even otherwise, enough valid decryption shares will not be generated for a forged sender's message.
\end{proof}

Byzantine-Induced Progress Halting (Accepted): 
A Byzantine sender can broadcast conflicting 
messages or messages with arbitrary sequence numbers or omit sending ECHOs/READYs/ACKs selectively. This can at worst result in the Byzantine sender’s own message stream being blocked or halted permanently at correct processes. 

\subsection{Complexity}
\label{sec:complexity}
For message $m$ in Bracha's 3-phase BRB, there are up to $n$ messages in the \texttt{init} phase, and up to $n^2$ messages in each of the \texttt{echo} and \texttt{ready} phases. All these messages are of size $\mathcal{O}(|m|)$. The fourth phase of our algorithm is the \texttt{ACK} phase with up to $n^2$ messages of size $\mathcal{O}(|m|)$. Thus: (1) message complexity is $\mathcal{O}(n^2)$, (2) message metadata size is $\mathcal{O}(1)$, and (3) communication word complexity is $\mathcal{O}(n^2(|m| + 1))$ which scales as $\mathcal{O}(n^2)$.

\section{Conclusions}
\label{sec:concl}
We proposed the first $\mathcal{O}(1)$ metadata overhead BCRB algorithm with $\mathcal{O}(n^2)$ communication word  complexity that guarantees strong safety and liveness. This is a solution having optimal metadata overhead.  Existing algorithms have $\mathcal{O}(n^3)$ communication word complexity \cite{Cachin2001,DBLP:journals/tcs/AuvolatFRT21}. 
Unlike prior protocols, our protocol integrates causal ordering logic with reliable delivery by reengineering Bracha's Byzantine Reliable Broadcast (BRB) layer \cite{DBLP:journals/iandc/Bracha87} requiring $\mathcal{O}(n^2)$ messages having $\mathcal{O}(1)$ metadata. Our protocol design, though specific to Bracha's algorithm because it relies on Bracha's specific quorum amplification and threshold properties, can also be analogously integrated into other BRB protocols, e.g., the Imbs-Raynal protocol \cite{DBLP:journals/ppl/ImbsR16} to solve BCRB.

\sloppy

\balance
\bibliographystyle{ACM-Reference-Format}
\bibliography{references}


\end{document}